\documentclass{SAGIP}
\usepackage{amsmath,amssymb,amsfonts,bm,amsthm}
\usepackage{multirow}
\usepackage{caption}
\usepackage{subcaption}

\begin{document}

\newcommand{\R}{\mathbb{R}}
\newcommand{\G}{\mathcal{G}}
\newcommand{\V}{\mathcal{V}}
\newcommand{\E}{\mathcal{E}}
\newcommand{\Lcal}{\bm{\mathcal{L}}}
\newcommand{\D}{\bm{\mathcal{D}}}
\newcommand{\A}{\bm{\mathcal{A}}}
\newcommand{\Hcal}{\bm{\mathcal{H}}}
\newcommand{\N}{\mathcal{N}}
\newcommand{\eps}{\bm{\varepsilon}}
\newcommand{\LMI}{\mathcal{LMI}}
\newcommand{\BMI}{\mathcal{BMI}}
\newcommand{\0}{\bm 0}
\newcommand{\bmzeta}{\bm{\zeta}}
\newcommand{\bmPsi}{\bm{\Psi}}
\newcommand{\VMA}{\text{VMA}}

\newtheorem{lemma}{Lemma}
\newtheorem{theorem}{Theorem}
\newtheorem{remark}{Remark}

\title{Dynamic event-triggered control for multi-agent systems with adjustable inter-event time: a moving average approach}

\def\shorttitle{DETC for MAS: a Moving Average Approach}

\author{Zeyuan Wang\inst{1}, Mohammed Chadli\inst{2}}

\institute{
University Paris-Saclay, Univ Evry, IBISC, 91020 Evry, France
 \\
\email{zeyuan.wang@universite-paris-saclay.fr}
\and
University Paris-Saclay, Univ Evry, IBISC, 91020 Evry, France
\\
\email{mohammed.chadli@univ-evry.fr}
}

 \begin{figure*}[!]
\begin{center} 
\centering
\label{fig:SusF}
\end{center}
\vspace{-1.5cm}
\end{figure*} 

\maketitle
\thispagestyle{empty}

\keywords{Multi-agent systems, Leader-following consensus, Dynamic event-triggered control, Moving average.}

\section{Introduction}
The research and application of multi-agent systems (MASs) originated in the 1980s and gained widespread development in the mid-1990s. In recent years, the development of technologies such as unmanned aerial vehicle coordination, underwater cooperation, and robot formation control has brought consensus issues in MASs to the forefront of global research. One of the particularly intriguing topics is leader-following consensus, also known as model reference consensus, where a group of agents needs to achieve consensus with the leader agents. This topic has been extensively explored for both linear and nonlinear MASs, and for MASs comprising homogeneous or heterogeneous agents under different scenarios.

Unlike the traditional time-sampling control, event-triggered (ET) control has the advantage of triggering sampling and communication dynamically and adaptively based on the state of the system, resulting in significantly reduced network load and energy consumption. Numerous studies have explored the application of this control method in MASs, such as in \cite{Trejo_Chadli22}. However, designing an efficient ET rule with adjustable inter-event time (IET) while avoiding the Zeno effect is still challenging. The dynamic ET mechanism (DETM) \cite{Yi_Liu19} has brought new ideas by introducing auxiliary dynamic variables (ADVs) to relax the Lyapunov function, thereby increasing IET. However, rigorous proof of the Zeno effect without introducing a dwell-time remains still difficult. A recent study \cite{Berneburg20} proposed a new DETM to address these issues, but its application in generic linear MAS may not guarantee global consensus \cite{Wu_Mao22}.

This paper is inspired by recent research on DETM using clock-like auxiliary dynamic variables (ADVs) \cite{Berneburg20, Chu_Huang20, Wu_Mao22}. However, a discontinuous Lyapunov function was chosen in \cite{Wu_Mao22}, leading to non-consensus results. Moreover, the DETM only guarantees stability between two consecutive events but not globally. Based on the above discussion, this paper proposes an improved DETMs to address the shortcomings of \cite{Wu_Mao22} for the leader-following problem. Our approach enforces a global decay rate of the discontinuous Lyapunov function using a moving average method, enabling IET tuning by varying the upper bound of ADVs and the length of the moving horizon. Numerical examples are given to validate the proposed DETMs, demonstrate the parameter's effect, and compare with the state-of-the-art results of \cite{Wu_Mao22}.

\section{Results}
\subsection{Improved DETM based on moving average approach}
Consider a linear MAS with $N$ follower agents and one leader represented by
\begin{equation} \label{equ:MAS model}
    \begin{aligned}
        \dot{\bm{x}}_i(t) &= \bm{A}\bm{x}_i(t)+\bm{B}\bm{u}_i(t), i=1,...,N , \quad \dot{\bm{x}}_0(t) &= \bm{A}\bm{x}_0(t)
    \end{aligned}
\end{equation}

where $\bm{x}_i(t)$ is the follower agents' state, $\bm{x}_0(t)$ is the leader's state and $\bm{\eps}_i(t) = \bm{x}_i(t)-\bm{x}_0(t)$ is the consensus error of agent $i$. 

The proposed control input $\bm u_i(t)$ of agent $i$ is defined as $\bm u_i(t) = \bm K \bm z_i(t) \text{ and }  \bm z_i(t) = \sum_{j=1}^N a_{ij}(\hat {\bm x}_j(t)- \hat {\bm x}_i(t)) + d_i(\bm x_0(t)-\hat{\bm  x}_i(t)) $. $\hat {\bm x}_i(t)$ is a kind of model-based estimation defined as:
$ \hat {\bm x}_i(t) = \bm x_i(t_k^i)e^{\bm A(t-t_k^i)}, t \in [t_k^i, t_{k+1}^i) $, where $\bm x_i(t_k^i)$ is the value of $\bm x_i(t)$ at the last triggering moment $t_k^i$, and $t_{k+1}^i$ is defined by the ET rule given in the following theorem. Define $\bm e_i(t) = \hat {\bm x}_i(t) - \bm x_i(t)$ and $\bm e(t) = [\bm e_1^T(t),...,\bm e_N^T(t)]^T$. 

The communication topology of $N$ follower agents is represented by an undirected weighted graph $\G=(\V, \E)$. Follower agents and the leader agent are represented as vertices $v_i, i \in \{1,...,N\}$ and $v_0$, respectively. Define $\bar \G = (\bar \V, \bar \E)$ as the augmented graph of $\G$, with $\bar \V = \V \cup \{v_0\}$ and $\bar \E = \E \cup \Delta \E$, where 
$(v_i, v_0) \in \Delta \E$ if agent $i$ is connected to the leader.

We propose the following theorem of improved dynamic event-triggered control which enhances the stability of the MAS compared to the study \cite{Wu_Mao22}, with designable inter-event time. 
\begin{theorem}\label{lemma: etm in soto} 
    Assume that $(\bm A,\bm B)$ is stabilizable and the communication topology graph between $N$ follower agents is weighted, undirected and fixed. If the augmented graph $\bar \G$ contains a spanning tree with the leader agent being its root, the MAS (\ref{equ:MAS model}) will reach leader-following consensus asymptotically with the event-triggered rule defined as:
    \begin{equation} \label{equ: dot theta ref}
        t_{k+1}^i  \triangleq 
        \left\{
        \begin{aligned}
            &\inf \left\{t>t_{k}^i \mid \theta_i(t) \leq 0\right\}, \\
            &\theta_i(t_{k}^i) = \bar{\theta_i}(t)>0
        \end{aligned}
        \right.
        ,
        \dot{\theta}_i(t) \triangleq 
        \begin{cases}
        \omega_i(t)-\tau_i & \text { if } \|\bm e_i(t)\| \neq 0 \text { or } \|\bm z_i(t)\| \neq 0 \\
        -\tau_i & \text { otherwise } 
        \end{cases}
    \end{equation}
    where $\tau_i > 0$ is a constant scalar and $\omega_i(t) \triangleq  -((\alpha \theta_i+2 \delta \beta+(\theta_i-1)^2 ) \|\bm e_i\|^2 + (2 \varepsilon \theta_i+\beta^2) \|\bm z_i\|^2 ) / ( \eta \|\bm e_i\|^2+\|\bm z_i\|^2 ) $, and  $\bar \theta_i(t)$ is defined as 
    \begin{equation} \label{equ:bar theta adaptive}\bar \theta_i(t_k^i) = 
    \begin{cases}
            \frac{F}{\|\bm z_i(t_{k}^i)\|^2}, & \text{if } \|\bm z_i(t_{k}^i)\|^2 \geq \frac{F}{\bar \theta_i(0)} \\
            \bar \theta_i(0), & \text{otherwise}
    \end{cases}
    \end{equation}
where $F$ is defined as  $F = \eps_i^T(t_{k-\ell}^i)\bm P\eps_i(t_{k-\ell}^i)e^{-\rho(t_k^i-t_{k-1}^i)} - \eps_i^T(t_k^i)\bm P\eps_i(t_k^i) 
    + \bar \theta_i(t_{k-\ell}^i) e^{-\rho(t_k^i-t_{k-1}^i)}$ $\|\bm z_i(t_{k-\ell}^i)\|^2$, $\alpha$, $\delta$, $\beta$, $\eta$, $\varepsilon$ are designed parameters, $\ell \in \mathbb{N}^+$ is the moving average steps, and $\rho>0$.
\end{theorem}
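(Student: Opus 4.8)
\section*{Proof proposal}

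The plan is to move to consensus-error coordinates, exploit the positive definiteness of the interaction matrix supplied by the spanning-tree hypothesis, design the pair $(\bm P,\bm K)$ from the stabilizability assumption, and then show that the moving-average reset of $\bar\theta_i$ is engineered precisely so that a \emph{sampled} Lyapunov value contracts over every window of $\ell$ events. First I would differentiate the consensus error: from $\eps_i=\bm x_i-\bm x_0$ and the two dynamics in (\ref{equ:MAS model}) one gets $\dot{\eps}_i=\bm A\eps_i+\bm B\bm K\bm z_i$. Writing $\hat{\bm x}_i=\bm x_i+\bm e_i$, hence $\hat{\bm x}_i-\bm x_0=\eps_i+\bm e_i$, and stacking all agents yields $\bm z=-(\bm H\otimes\bm I)(\eps+\bm e)$, where $\bm H=\bm L+\bm D$ with $\bm L$ the Laplacian of $\G$ and $\bm D$ the diagonal matrix of leader connections. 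Under the spanning-tree assumption $\bm H=\bm H^{T}>0$, so an orthogonal change of basis $\bm U^{T}\bm H\bm U=\mathrm{diag}(\lambda_1,\dots,\lambda_N)$ with $\lambda_j>0$ decouples the error system into $N$ modal blocks. Stabilizability of $(\bm A,\bm B)$ then lets me select $\bm P=\bm P^{T}>0$ and a gain $\bm K$ solving an LMI that renders every mode matrix $\bm A+\lambda_j\bm B\bm K$ Hurwitz; this is what fixes the admissible ranges of $\alpha,\beta,\delta,\eta,\varepsilon$ and of the rate $\rho$.

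Second, I would fix an interval $[t_{k-1}^i,t_k^i)$ and compute the derivative of the per-agent functional $V_i=\eps_i^{T}\bm P\eps_i+\theta_i\|\bm z_i\|^2$, in which the clock-like ADV $\theta_i$ supplies the relaxation. The exact algebraic form of $\omega_i$ is reverse-engineered here: after completing the square on the cross term $\eps_i^{T}\bm P\bm B\bm K\bm z_i$ and collecting the $\|\bm e_i\|^2$ and $\|\bm z_i\|^2$ contributions against the tuning scalars, I expect an estimate of the shape $\dot V_i\le -\rho V_i+(\omega_i-\dot\theta_i)(\eta\|\bm e_i\|^2+\|\bm z_i\|^2)$. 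Substituting the prescribed law $\dot\theta_i=\omega_i-\tau_i$ and using the triggering guarantee $\theta_i(t)\ge 0$ collapses this to $\dot V_i\le -\rho V_i$ on the whole interval, so the continuous part contracts at rate $\rho$ between resets.

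Third, and this is the crux, I would treat the discontinuity. At each trigger $\bm e_i$ jumps to $\0$ and $\theta_i$ resets from $0$ to $\bar\theta_i$, so a purely inter-event estimate---the route taken in \cite{Wu_Mao22}---need not produce a globally decaying quantity. The datum $F$ is exactly the discounted slack of the sampled value $\Phi_i(t_k^i)\triangleq\eps_i^{T}(t_k^i)\bm P\eps_i(t_k^i)+\bar\theta_i(t_k^i)\|\bm z_i(t_k^i)\|^2$ taken $\ell$ events back, since $F=e^{-\rho(t_k^i-t_{k-1}^i)}\Phi_i(t_{k-\ell}^i)-\eps_i^{T}(t_k^i)\bm P\eps_i(t_k^i)$. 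The reset forces $\bar\theta_i(t_k^i)\|\bm z_i(t_k^i)\|^2\le F$ in both branches, i.e. $\Phi_i(t_k^i)\le e^{-\rho(t_k^i-t_{k-1}^i)}\Phi_i(t_{k-\ell}^i)$, which telescopes across windows to a uniform geometric decay of the sampled functional and hence $\eps_i(t)\to\0$. The step needing real work is showing $F>0$, so that $\bar\theta_i(t_k^i)=\min\{F/\|\bm z_i\|^2,\bar\theta_i(0)\}$ is a legitimate positive clock value; this is where the inter-event contraction from the previous step must be matched against the $\ell$-step horizon.

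Finally, for Zeno exclusion I would note that immediately after a trigger $\bm e_i=\0$, so $\omega_i=-(2\varepsilon\theta_i+\beta^2)$ is finite; more generally $\omega_i$ is a convex-combination-type ratio of the $\|\bm e_i\|^2$- and $\|\bm z_i\|^2$-rates, hence $|\omega_i|$ stays bounded as long as $\theta_i\in[0,\bar\theta_i(0)]$, making $|\dot\theta_i|=|\omega_i-\tau_i|$ uniformly bounded. With $\bar\theta_i$ bounded below by a positive constant, the clock needs a strictly positive time to descend from $\bar\theta_i$ to $0$, so inter-event times cannot accumulate and tuning $\tau_i$, $\bar\theta_i(0)$ adjusts the guaranteed IET. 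The main obstacle I anticipate is precisely the reconciliation in the third step: the between-event bound naturally compares $t_k^i$ with $t_{k-1}^i$, whereas the moving average compares with $t_{k-\ell}^i$, so I will need to control how the $\bm e_i$-jumps and the branch-switch of the cap $\bar\theta_i(0)$ interact across the horizon in order to keep both $F\ge 0$ and the telescoping intact.
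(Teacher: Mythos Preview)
Your route coincides with the paper's: a per-agent functional built from $\eps_i^{T}\bm P\eps_i$ plus a $\theta_i$-weighted relaxation, inter-event decay enforced by the $\dot\theta_i$ law, then the reset (\ref{equ:bar theta adaptive}) forcing the sampled inequality $V_i(t_k^i)\le e^{-\rho(t_k^i-t_{k-1}^i)}V_i(t_{k-\ell}^i)$, which the paper also rewrites as strict monotonicity of the $\ell$-step moving average of $\{V_i(t_k^i)\}_{k\in\mathbb N}$; your Zeno argument via boundedness of $\omega_i$ and the descent time of $\theta_i$ is likewise the paper's.

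Two concrete repairs are needed. First, your $V_i$ is missing a term: the paper uses $V_i=\eps_i^{T}\bm P\eps_i+\theta_i\,\bm e_i^{T}\bm P\bm e_i+\theta_i\|\bm z_i\|^2$, and the extra $\theta_i\,\bm e_i^{T}\bm P\bm e_i$ contribution is precisely what produces the $(\theta_i-1)^2\|\bm e_i\|^2$ piece in the prescribed $\omega_i$; with only $\eps_i^{T}\bm P\eps_i+\theta_i\|\bm z_i\|^2$ your derivative will not close on the stated expression. Second, the sign in your step-two inequality is reversed: substituting $\dot\theta_i=\omega_i-\tau_i$ into $(\omega_i-\dot\theta_i)$ gives $+\tau_i$, hence $\dot V_i\le -\rho V_i+\tau_i(\eta\|\bm e_i\|^2+\|\bm z_i\|^2)$, not $\dot V_i\le -\rho V_i$; the target shape should be $\dot V_i\le -\rho V_i+(\dot\theta_i-\omega_i)(\eta\|\bm e_i\|^2+\|\bm z_i\|^2)$, after which the substitution yields the needed $-\tau_i(\cdot)\le 0$. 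Finally, for Zeno, note that $\bar\theta_i(t_k^i)=\min\{F/\|\bm z_i(t_k^i)\|^2,\bar\theta_i(0)\}$ is not automatically bounded away from zero merely because $F>0$; a uniform positive lower bound is what you actually need, and securing it is part of the horizon reconciliation you correctly flag at the end.
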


\begin{proof}
    Due to page limits, we can only provide a brief proof and key ideas. The adaptive rule of $\bar \theta_i(t)$ in Equation \ref{equ:bar theta adaptive} is based on a moving average of the Lyapunov function defined as $V = \sum_{i=1}^N V_i$ with $V_i = \eps_i^T \bm P \eps_i+ \theta_i\bm e_i^T\bm P\bm e_i + \theta_i \|\bm z_i\|^2$. The rule (\ref{equ: dot theta ref}) enables $V$ decreasing between arbitrary two consecutive events and the rule (\ref{equ:bar theta adaptive}) guarantees that the sequence $\{V_i(t_k^i)\}_{k \in \mathbb{N}}$ converges to zero. Substituting (\ref{equ:bar theta adaptive}) in $V_i$ gives $ V_i(t_{k}^i) \leq V_i(t_{k-\ell}^i)e^{-\rho(t_{k+1}^i-t_k^i)}$, 
    which implies that $V_i(t_{k}^i) < V_i(t_{k-\ell}^i)$ thus both the sequence $\{V_i(t_k^i)\}_{k \in \mathbb{N}}$ and $V(t)$ converge to 0.
    By adding $\sum_{n=k-\ell+1}^{k-1}V_i(t_n^i) $ on the left and right of $V_i(t_{k}^i) < V_i(t_{k-\ell}^i)$, and divide both sides by $\ell$, we have $\VMA(t_k^i, \ell) <  \VMA(t_{k-1}^i, \ell)$, where $\VMA(t_k^i, \ell)$ is defined as $\VMA(t_k^i, \ell) = \left[V_i(t_k^i) + V_i(t_{k-1}^i) + ... + V_i(t_{k-\ell+1}^i)\right] / \ell$. 
$\VMA(t_k^i, \ell)$ is the $\ell$-steps moving average of $\{V_i(t_k^i)\}_{k \in \mathbb{N}}$. It is noticeable that the sequence $\{V_i(t_k^i)\}_{k \in \mathbb{N}}$ may increase locally, but in a global view (represented by $\VMA(t_k^i, \ell)$), it keeps a decreasing trend with a moving horizon of length $\ell$. In addition, this theorem excludes the Zeno behavior, which can be proved by finding a lower bound of $\omega_i$ and calculating the time required for $\theta_i$ decreasing from $\bar \theta_i$ to 0.
\end{proof}

\begin{remark}
This proposed strategy provides more tolerance to small perturbations of $V$ and eventually increases the inter-event time as long as possible while possibly sacrificing some performance temporarily. Compared to the original DETM in \cite{Wu_Mao22}, the instability problem is well addressed by varying $\bar \theta_i$ in an adaptive manner.
\end{remark}

\subsection{Numerical Example}
We demonstrate the effectiveness of the proposed strategy by the following example. Consider a generic linear MAS associated with the topology in Figure (\ref{fig:topology graph}). Figure (\ref{fig:consensus}) illustrate the trajectories of consensus error under the proposed DETM with $\bar \theta_i(0)=5000$ and $\ell=50$, which shows an asymptotic convergence.
Table~\ref{tab:IET comparison} presents how varying moving average steps $\ell$ and $\bar \theta_i(0)$ affects the performance. The inter-event time could be enlarged significantly by increasing $\ell$ or $\bar \theta_i(0)$, e.g., from 1.69 ms ($\bar \theta_i(0)=1000$, $\ell=3$) to 32.83 ms ($\bar \theta_i(0)=13000$, $\ell=100$). Compared to the existing work \cite{Wu_Mao22}, the global convergence is strictly guaranteed, and the IET is much longer than the classic static event-triggered mechanism (SETM).

\vspace{0.5cm}

\begin{minipage}{\textwidth}
  \begin{minipage}[b]{0.22\textwidth}
    \centering
    \includegraphics[height=1.7cm,clip=true]{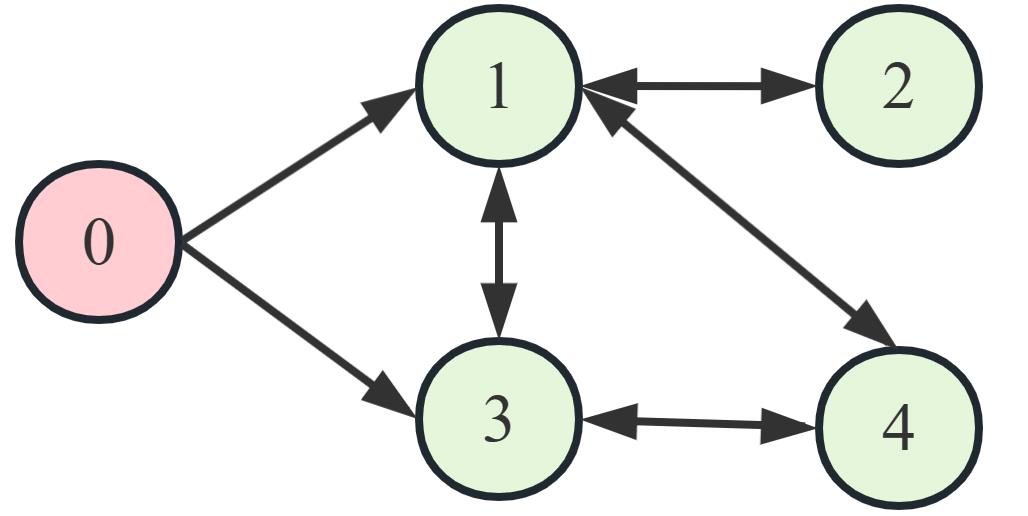}
    \vspace{0.8cm}
    \captionof{figure}{Communication topology}
    \label{fig:topology graph}
  \end{minipage}
  \hfill
    \begin{minipage}[b]{0.3\textwidth}
    \centering
    \includegraphics[height=3.5cm,clip=true]{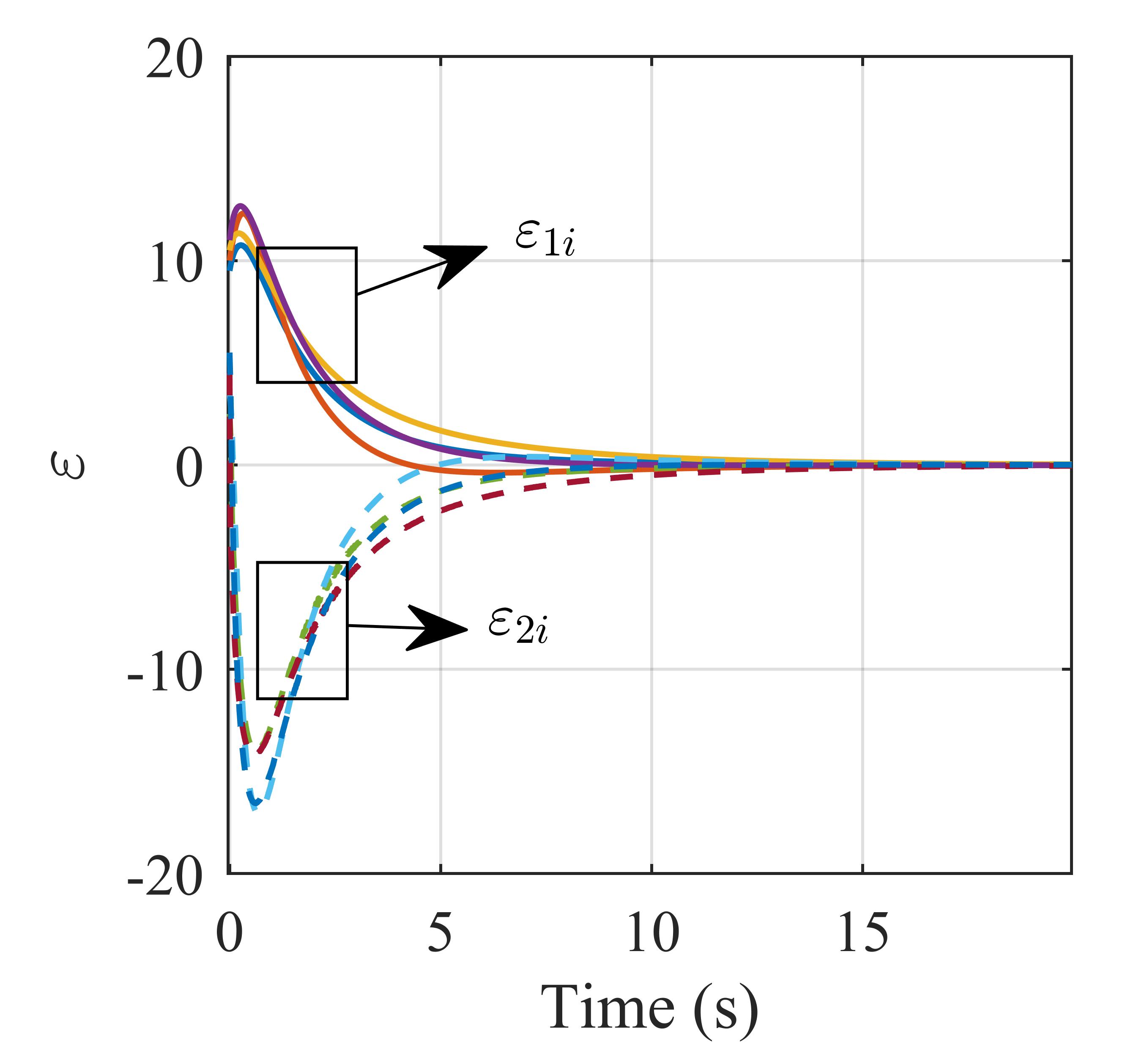}
    \captionof{figure}{Trajectory of consensus error}
    \label{fig:consensus}
  \end{minipage}
  \hfill
  \begin{minipage}[b]{0.4\textwidth}
    \centering
        \begin{tabular}{lccccc}
        \hline
        \multicolumn{1}{l}{$\overline \theta_i(0)$} & 1000   & 5000   & 13000    \\ \hline
        $\ell=3 $        & 1.69     & 3.49     & 8.07     \\
        $\ell=10 $       & 3.79     & 11.72    & 16.51    \\
        $\ell=50 $       & 4.40     & 20.09    & 30.04   \\
        $\ell=100 $      & 4.63     & 24.61    & 32.83   \\ 
        SETM & \multicolumn{3}{c}{0.82} \\
        Ref \cite{Wu_Mao22} & \multicolumn{3}{c}{unguaranteed consensus} \\   
        \hline
        \end{tabular}
      \captionof{table}{IET (in ms) using different $\ell$ and $\bar \theta_i(0)$, compared to other strategies}
      \label{tab:IET comparison}
    \end{minipage}
\end{minipage}

\section{Conclusion}

This paper investigates the leader-following consensus problem of MASs and proposes an improved DETM strategy that ensures asymptotic consensus and prevents Zeno behavior. By using the moving average method to enforce a global convergence, a successful solution to the instability problem is proposed. The proposed result shows that the performance in terms of inter-event time has been significantly improved compared to existing results and the static event-triggered control. Further works will aim at observer-based consensus control.

\bibliographystyle{plain}


\end{document}